\newtheorem{theorem}{Theorem}
\newtheorem{corollary}{Corollary}
\theoremstyle{definition}
\newtheorem{example}{Example}
\newtheorem{definition}{Definition}
\newtheorem{scheme}{Scheme}
\newcommand\numberthis{\addtocounter{equation}{1}\tag{\theequation}}
\begin{document}
\title{Optimal Computational Secret Sharing} 


 \author{
   \IEEEauthorblockN{Igor L. Aureliano\IEEEauthorrefmark{1},
                     Alejandro Cohen\IEEEauthorrefmark{2}, and
                     Rafael G. L. D'Oliveira\IEEEauthorrefmark{3}}
   \IEEEauthorblockA{\IEEEauthorrefmark{1}
                    Institute of Mathematics, Statistics and Scientific Computing, University of Campinas, SP, Brazil,
                     i236779@dac.unicamp.br}
   \IEEEauthorblockA{\IEEEauthorrefmark{2}
                    Faculty of Electrical and Computer Engineering, Technion--Institute of Technology, Haifa, Israel,
                     alecohen@technion.ac.il}
   \IEEEauthorblockA{\IEEEauthorrefmark{3}
                     School of Mathematical and Statistical Sciences, Clemson University, Clemson, SC, USA,
                     rdolive@clemson.edu}
 }

\maketitle


\begin{abstract}

In $(t, n)$-threshold secret sharing, a secret $S$ is distributed among $n$ participants such that any subset of size $t$ can recover $S$, while any subset of size $t-1$ or fewer learns nothing about it. For information-theoretic secret sharing, it is known that the share size must be at least as large as the secret, i.e., $|S|$. When computational security is employed using cryptographic encryption with a secret key $K$, previous work has shown that the share size can be reduced to $\tfrac{|S|}{t} + |K|$.

In this paper, we present a construction achieving a share size of $\tfrac{|S| + |K|}{t}$. Furthermore, we prove that, under reasonable assumptions on the encryption scheme --- namely, the non-compressibility of pseudorandom encryption and the non-redundancy of the secret key --- this share size is optimal.

\end{abstract}

\section{Introduction}

Secret sharing is a cryptographic technique that distributes a secret among a group of participants such that only specific authorized subsets can reconstruct it, while unauthorized subsets gain no information. The concept was first introduced by Blakley~\cite{blakely} and Shamir~\cite{shamir}, whose schemes --- now known as $(t,n)$-threshold schemes --- allow any subset of at least $t$ participants to recover the secret, while subsets of size $t-1$ or fewer learn nothing about it in an information-theoretic sense.

In \cite{hellman2}, Karnin et al. showed that for information-theoretic $(t,n)$-threshold secret sharing, each share must be at least as large as the secret itself, i.e., of size $|S|$.\footnote{Or at least the size of the entropy of the secret $S$ if it is compressible. In this paper, we assume the secret is already compressed.} In contrast, Krawczyk \cite{krawczyk1993secret} showed that computationally secure $(t,n)$-threshold secret sharing is possible with shares smaller than the secret. Their scheme\footnote{Krawczyk \cite{krawczyk1993secret} assumes a length-preserving encryption. We adopt the same assumption in our work.}, known as Secret Sharing Made Short (SSMS), produces shares of size $\tfrac{|S|}{t} + |K|$, where $K$ is the key used in the encryption function.

In this paper, we present a computationally secure $(t,n)$-threshold secret sharing scheme, we call Pseudorandom Encryption Threshold Sharing (PETS), where each participant’s share has size $\tfrac{|S| + |K|}{t}$, with $|S|$ denoting the secret size and $|K|$ the encryption key size. Under reasonable assumptions on the encryption --- namely, the non-compressibility of our pseudorandom encryption and the non-redundancy of the key $K$ ---  we prove that this share size is optimal.  Intuitively, the collective data held by any $t$ participants must include both a ciphertext of $S$ (of size $|S|$) and the key $K$ (of size $|K|$). Distributing $|S| + |K|$ bits across $t$ participants ensures that the average per-share size cannot be smaller than $\tfrac{|S| + |K|}{t}$.

\subsection{A $(2,3)$-Threshold Example}

To show how PETS is constructed, we provide an example and compare it against both an information-theoretically secure Shamir scheme and the computational SSMS scheme proposed in \cite{krawczyk1993secret}.

\begin{figure*}
    \centering
    \includegraphics[width=0.95\linewidth]{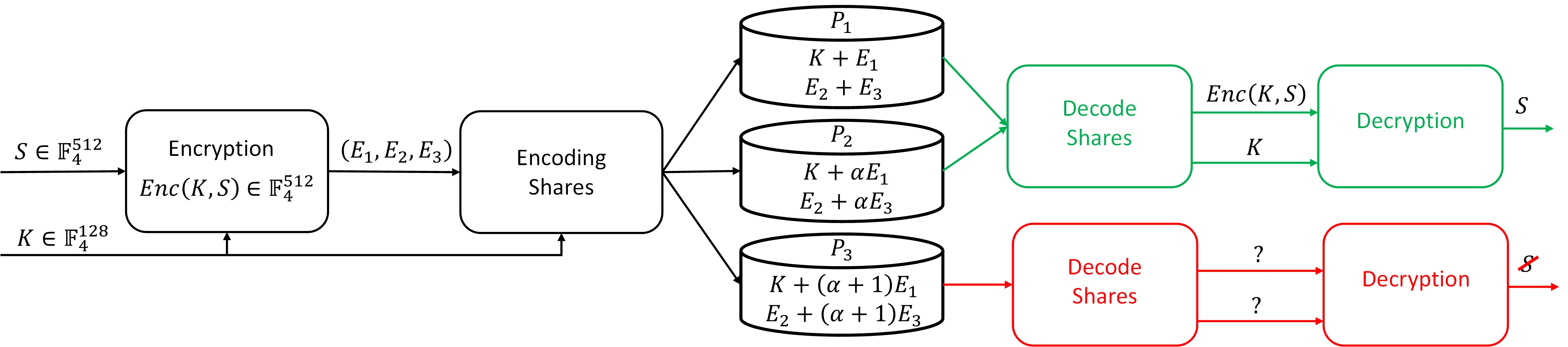}
    \vspace{-0.25cm}
    \caption{Illustration of PETS in Example~\ref{subsubsec:OurScheme23}. A secret $S \in \mathbb{F}_4^{512}$ is shared among three participants $P_1$, $P_2$, and $P_3$, such that any two participants can reconstruct $S$, while any single participant learns nothing. The secret is encrypted as $\text{Enc}(K, S)$ using a 256-bit key $K \in \mathbb{F}_4^{128}$. The ciphertext $\text{Enc}(K, S)$ is split into three parts: $E_1 \in \mathbb{F}_4^{128}$ and $E_2, E_3 \in \mathbb{F}_4^{192}$. Shares are constructed using Shamir's secret sharing scheme for $K$ with $E_1$ as the random symbol and an information dispersal algorithm for $E_2$ and $E_3$. Any two participants can combine their shares to recover $K$, reconstruct $\text{Enc}(K, S)$, and decrypt $S$, while no single participant gains any information about $K$, and therefore, about $S$. The security of the scheme relies on the pseudorandomness of $\text{Enc}(K, S)$.}
    \label{fig:example}
    \vspace{-0.4cm}
\end{figure*}

Consider a $1024$ bit secret $S$ to be shared among three participants  $\{P_1, P_2, P_3\}$. Any two participants can reconstruct $S$, while any single participant alone learns nothing. We present three schemes:
\begin{enumerate}[wide, labelwidth=0.3cm, labelindent=1pt]
\item Shamir's information-theoretic secret sharing scheme \cite{shamir},
\item Computational \say{Secret Sharing Made Short} scheme~\cite{krawczyk1993secret},
\item Our Pseudorandom Encryption Threshold Sharing scheme.
\end{enumerate}

For the computational schemes, we rely on a pseudorandom encryption 
$\text{Enc}(K, S)$ with a $256$-bit key $K$, producing a $1024$-bit ciphertext $\text{Enc}(K, S)$. We work over the finite field $\mathbb{F}_4$ with elements $\{0,1,\alpha,\alpha+1\}$ satisfying $\alpha^2 = \alpha + 1$. We identify $\mathbb{F}_2^{1024}$ with $\mathbb{F}_4^{512}$, so that, for example, $S \in \mathbb{F}_4^{512}$ and $\text{Enc}(K, S) \in \mathbb{F}_4^{512}$.

\begin{example} [Shamir Secret Sharing \cite{shamir}]
\label{subsubsec:ITscheme23}
We choose $R \in \mathbb{F}_4^{512}$ uniformly at random and define three shares: $S_1 =S+ R$, $S_2 =S+ \alpha R$, and $S_3 =S+ (\alpha+1) R$. Each participant $P_i$ receives $S_i$. 

\begin{enumerate}[wide, labelwidth=0.3cm, labelindent=1pt, label=(\roman*)]
\item \underline{Reconstruction}: Any two participants can recover $S$. For instance,  if $P_1$ and $P_2$ collaborate, they can compute
\begin{align*}
(\alpha +1) S_1 + \alpha S_2
&=
(\alpha +1) (S + R) 
+
\alpha (S + \alpha R) \\ 
&=
S, \numberthis \label{eq: reconstruct simplest example}
\end{align*}
A similar procedure works for any other pair.

\item \underline{Information-Theoretic Security}: A single share alone, say $S_1 =S+ R$, reveals nothing about $S$ because $R$ is uniform. The same argument applies to $S_2$ and $S_3$.

\item \underline{Share Size:}  Each share has $512$ symbols in $\mathbb{F}_4$. For information-theoretic security, this share size is optimal \cite{hellman2}.
\end{enumerate}
\end{example}

\begin{example} [Secret Sharing Made Short \cite{krawczyk1993secret}]
\label{subsubsec:krawczyk23}

We encrypt the secret $S \in \mathbb{F}_4^{512}$ using the key $K \in \mathbb{F}_4^{128}$ to obtain $\text{Enc}(K, S) \in \mathbb{F}_4^{512}$. We split this ciphertext into two parts, $\text{Enc}(K, S) = (E_1, E_2)$, with $E_1,E_2 \in \mathbb{F}_4^{256}$. We then choose $R \in \mathbb{F}_4^{128}$ uniformly at random and define three shares: 
\[
\begin{aligned}
S_1 &= (K + R,\; E_1 + E_2),\\
S_2 &= (K + \alpha R,\; E_1 + \alpha E_2),\\
S_3 &= \bigl(K + (\alpha + 1)R,\;\, E_1 + (\alpha + 1)E_2\bigr).
\end{aligned}
\]
Each participant $P_i$ receives $S_i$.

\begin{enumerate}[wide, labelwidth=0.3cm, labelindent=1pt, label=(\roman*)]
\item \underline{Reconstruction}: 
Any two participants can recover $K$ and $(E_1,E_2)$. For example, if $P_1$ and $P_2$ collaborate, they use the pairs $(K + R,\; K + \alpha R)$ to solve for $K$, as in \eqref{eq: reconstruct simplest example}, and $(E_1 + E_2,\; E_1 + \alpha E_2)$ to solve for $E_1,E_2$ again as in \eqref{eq: reconstruct simplest example}. Having $K$ and the complete ciphertext $\text{Enc}(K,S)$, they decrypt to obtain $S$. A similar procedure works for any pair.

\item \underline{Computational Security}:
A single share, say $S_1=(K + R,\; E_1 + E_2)$, does not reveal any information about $K$ because $R$ is uniform. The same argument applies for $S_2$ and $S_3$. Without any information about the key, the secret is computationally secured by the encryption $\text{Enc}(K,S)$.

\item \underline{Share Size}: 
Each share includes a 128-symbol block and a 256-symbol block, totaling $384$ symbols in $\mathbb{F}_4$. This is lower than what is possible in the information-theoretic approach.
\end{enumerate}
\end{example}

\begin{example}[Pseudorandom Encryption Threshold Sharing]
\label{subsubsec:OurScheme23}

As illustrated in Fig~\ref{fig:example}, we encrypt $S \in \mathbb{F}_4^{512}$ using the key $K \in \mathbb{F}_4^{128}$ to obtain $\text{Enc}(K, S) \in \mathbb{F}_4^{512}$. Unlike the previous approach, we now split the ciphertext into three parts $\text{Enc}(K, S) = (E_1, E_2, E_3)$, where $E_1 \in \mathbb{F}_4^{128}$ and $E_2, E_3 \in \mathbb{F}_4^{192}$. We define three shares:
\[
\begin{aligned}
S_1 &= \bigl(K + E_1,\;\; E_2 + E_3\bigr),\\
S_2 &= \bigl(K + \alpha\,E_1,\;\; E_2 + \alpha\, E_3\bigr),\\
S_3 &= \bigl(K + (\alpha+1)\,E_1,\;\; E_2 + (\alpha+1)\,E_3\bigr).
\end{aligned}
\]
Each participant $P_i$ receives the share $S_i$.

\begin{enumerate}[wide, labelwidth=0.3cm, labelindent=1pt, label=(\roman*)]
\item \underline{Reconstruction}: Any two participants can recover $K$ and $(E_1, E_2, E_3)$. For example, if $P_1$ and $P_2$ collaborate, they use $(K + E_1,\;\; K + \alpha\,E_1)$ to solve for $K$ and $E_1$, as in \eqref{eq: reconstruct simplest example}, and $(E_2 + E_3,\;\; E_2 + \alpha\, E_3)$ to solve for $E_2$ and $E_3$ again as in \eqref{eq: reconstruct simplest example}. Having $K$ and the complete ciphertext $\text{Enc}(K,S)$, they decrypt to obtain $S$. A similar procedure works for any pair.

\item \underline{Computational Security}:
A single share, say $S_1=(K + E_1,\; E_2 + E_3)$, does not reveal any information about $K$ because $E_1$ is pseudorandom, and therefore, $K + E_1$ is computationally indistinguishable from uniform to any computationally bounded adversary. The same argument applies for $S_2$ and $S_3$. Without any information about the key, the secret is computationally secured by the encryption $\text{Enc}(K,S)$.

\item \underline{Share Size}:
Each share includes a 128-symbol block and a 192-symbol block, totaling $320$ symbols in $\mathbb{F}_4$. We show in Theorem~\ref{thm:ThresholdOptimality} that this share size is optimal.
\end{enumerate}
\end{example}

\subsection{Related Work}

Secret sharing was originally introduced by Blakley \cite{blakely} and Shamir \cite{shamir}, who proposed information-theoretic threshold schemes. For such schemes, Karnin et al. \cite{hellman2} proved that the size of each share must be at least as large as the secret itself. 

Krawczyk \cite{krawczyk1993secret} introduced the “Secret Sharing Made Short” (SSMS) scheme which achieves computationally secure threshold secret sharing with share sizes smaller than the information-theoretic bound. This technique combines computational encryption with an Information Dispersal Algorithm (IDA) \cite{rabin1990information,naor1995optimal,beguin1998general,1023682}. Krawczyk also explored robust computational secret sharing in the same work, addressing the challenge of reconstructing the secret correctly even in the presence of tampered or corrupted shares. However, robustness is not the focus of our work, which is instead aligned with the original SSMS setting of optimizing share sizes for threshold access structures under computational security assumptions.

Robust computational secret sharing was later revisited and formalized by Bellare and Rogaway \cite{bellare2007robust}, who proved the security of Krawczyk’s robust scheme in the random oracle model. They also introduced a refined construction, HK2, that achieves robustness under standard assumptions. Additional work has extended these ideas to general access structures, as in \cite{beguin1995general,cachin1995line,mayer1997generalized,vinod2003power}, where the goal is to support more flexible definitions of authorized subsets.

Another line of research is the AONT-RS scheme \cite{resch2011aont}, which combines an all-or-nothing transform (AONT) \cite{rivest1997all} with Reed-Solomon coding \cite{reed1960polynomial} for protecting stored data. Chen et al. \cite{chen2017revisiting} revisited AONT-RS, presenting a generalized version while pointing out that the scheme can leak information if the ciphertext size is too small relative to the security parameter and threshold. They further demonstrated that AONT-RS achieves weaker security guarantees compared to SSMS. 

Our work can be viewed within the framework of Krawczyk’s original SSMS, where the goal is to minimize the share size while ensuring computational secrecy. Unlike robust computational secret sharing, we do not address tampered shares or fault tolerance. Instead, we focus on optimizing the share size under computational security.

\subsection{Paper Structure}

The paper is structured as follows. Section~\ref{sec:Preliminaries} introduces the necessary foundations of computational security, including key definitions and concepts relevant to our security analysis. In Section~\ref{sec:preliminaris}, we define fundamental notions of secret sharing and information dispersal algorithms, which serve as the building blocks for our scheme. Section~\ref{sec:main_PETS} presents our optimal computational secret sharing scheme, detailing its construction, security proofs, and optimality analysis. Finally, Section~\ref{sec:diss} concludes the paper with a discussion of potential future research directions.

\section{Computational Security}
\label{sec:Preliminaries}

This section formalizes the concept of computational security, presenting key concepts such as negligible functions, adversaries, and computational indistinguishability \cite{GOLDWASSER1984270} that we need for our security proofs. For further details, see \cite{katz2007introduction}.

We begin with negligible functions. A negligible function grows more slowly than the inverse of any positive polynomial. 

\begin{definition}[Negligible Function]
\label{def:NegligibleFunction}
A function $\mu: \mathbb{N} \to \mathbb{R}$ is called negligible if, for every 
positive polynomial $p(\cdot)$, there exists an $N \in \mathbb{N}$ such that $\mu(\lambda) < \tfrac{1}{p(\lambda)}$ for 
all $\lambda > N$.
\end{definition}

We model an attacker as a probabilistic polynomial-time (PPT) algorithm that attempts to distinguish between two distributions. We quantify success by a value called advantage, which must remain negligible for a secure scheme.

\begin{definition}[Adversary (Distinguisher)]
\label{def:Adversary}
An adversary $\mathcal{A}$ is a PPT algorithm that receives samples from either  $\mathcal{D}_0$ or $\mathcal{D}_1$ and outputs a guess in $\{0,1\}$. Its advantage  in distinguishing these distributions is
\[
\mathrm{Adv}_{\mathcal{D}_0,\mathcal{D}_1}(\mathcal{A})
=
\Bigl|\,
\Pr[\mathcal{A}(\mathcal{D}_0) = 1]
-
\Pr[\mathcal{A}(\mathcal{D}_1) = 1]
\Bigr|.
\]
\end{definition}

Two distributions $\mathcal{D}_0$ and $\mathcal{D}_1$ are said to be 
computationally indistinguishable if no PPT adversary $\mathcal{A}$ can distinguish them with more than negligible advantage. This is central to computational security proofs, since we want a real execution of a scheme to be indistinguishable from an ideal one.

\begin{definition}[Computational Indistinguishability]
\label{def:Indistinguishability}
Let $\mathcal{D}_0(\lambda)$ and $\mathcal{D}_1(\lambda)$ be two ensembles 
of distributions, parameterized by the security parameter~$\lambda$. 
The distributions are computationally indistinguishable if, for every PPT adversary 
$\mathcal{A}$,
\[
\mathrm{Adv}_{\mathcal{D}_0,\mathcal{D}_1}(\mathcal{A}) 
\;=\;
\Bigl|\,
\Pr\bigl[\mathcal{A}(\mathcal{D}_0(\lambda)) = 1\bigr]
\;-\;
\Pr\bigl[\mathcal{A}(\mathcal{D}_1(\lambda)) = 1\bigr]
\Bigr|
\]
is \emph{negligible} in $\lambda$.
\end{definition}

A pseudorandom encryption scheme ensures that encryptions of a message are indistinguishable from random strings of the same length to any PPT adversary.

\begin{definition}[Pseudorandom Encryption Scheme]
\label{def:PseudorandomEncryption}
A pseudorandom encryption scheme is a triple of algorithms 
$(\textsf{KeyGen}, \textsf{Enc}, \textsf{Dec})$ such that:
\begin{itemize}
    \item $\textsf{KeyGen}(1^\lambda)$ takes a security parameter $\lambda$ 
    and outputs a secret key $K$.
    \item $\textsf{Enc}(K, S)$ takes $K$ and a message $S$ and outputs 
    a ciphertext $C$.
    \item $\textsf{Dec}(K, C)$ takes $K$ and a ciphertext $C$ and outputs 
    the message $S$ or a failure symbol.
\end{itemize}
For any PPT adversary $\mathcal{A}$, let 
$\mathcal{D}_0 = \{\textsf{Enc}(K,S)\}$ be the distribution of encryptions of $S$ 
and $\mathcal{D}_1 = \{\mathrm{Uniform}_{|S|}\}$ be the uniform distribution over 
strings of the same length. The scheme is secure if 
\[
\mathrm{Adv}_{\mathcal{D}_0,\mathcal{D}_1}(\mathcal{A}) 
\;=\;
\Bigl|\Pr[\mathcal{A}(\mathcal{D}_0) = 1] \;-\; 
\Pr[\mathcal{A}(\mathcal{D}_1) = 1]\Bigr|
\]
is \emph{negligible} in $\lambda$ for all PPT $\mathcal{A}$.

\end{definition}

The next definition will be necessary to prove the optimality of our scheme.

\begin{definition}[Non-Redundant Encryption Scheme]
\label{def:NonRedundantEncryptionScheme}
Let $\Pi = (\mathsf{KeyGen}, \mathsf{Enc}, \mathsf{Dec})$ be an encryption scheme 
with key space $\mathcal{K}$. We say that $\Pi$ has the \emph{non-redundant key 
property} if, for every key $k \in \mathcal{K}$ and for every bit position $i$ 
in $k$, define $k^{(i)}$ to be the key obtained by flipping the $i$-th bit of $k$. 
Then there \emph{exists} a message $m$ such that $\mathsf{Dec}\bigl(k^{(i)}, \mathsf{Enc}(k, m)\bigr) \neq m$.
\end{definition}

Intuitively, this property ensures that every bit of the key is essential for decryption. If any bit could be ignored without affecting decryption, the key size could be reduced.

\section{Single, Computational Secret Sharing and Information Dispersal Algorithms}\label{sec:preliminaris}

This section provides formal definitions of secret sharing and information dispersal algorithms. We define threshold secret sharing in both the information-theoretic and computational settings and introduce information dispersal algorithms.

Threshold secret sharing allows a single secret $S$ to be distributed among $n$ participants such that any subset of at least $t$ participants can reconstruct $S$, while any subset of $t-1$ participants or fewer learns nothing about the secret $S$. This construction ensures that sensitive information remains protected unless a sufficient number of participants collaborate to recover the secret.

\begin{definition}[Information-Theoretic Threshold Secret Sharing Scheme]
\label{def:ThresholdSecretSharing}
Let $\mathcal{P} = \{P_1, P_2, \ldots, P_n\}$ be a set of $n$ participants, and let $t$ be a threshold. An information-theoretic threshold secret sharing scheme with threshold $t$ is a pair of algorithms $(\textsf{ITShare}, \textsf{ITRecon})$ defined as follows:

\noindent
\textbf{ITShare:} $\textsf{ITShare}(S, \mathcal{P}, t) \rightarrow (S_1, S_2, \ldots, S_n)$, where:
\begin{itemize}
    \item $S$ is the secret,
    \item $\mathcal{P}$ is the set of $n$ participants,
    \item $t$ is the reconstruction threshold.
\end{itemize}
This algorithm outputs a share $S_i$ for each participant $P_i$.

\noindent
\textbf{ITRecon:} $\textsf{ITRecon}\bigl(\{S_i : P_i \in A\}, A\bigr)$ takes as input a subset of participants $A \subseteq \mathcal{P}$ and their shares $\{S_i : P_i \in A\}$, and:
\begin{itemize}
    \item If $|A| \geq t$, it reconstructs $S$.
    \item Otherwise, it outputs a failure symbol (e.g., $\bot$).
\end{itemize}

The scheme must satisfy the following two properties:
\begin{enumerate}
    \item \textbf{Decodability:} $\textsf{ITRecon}\bigl(\{S_i : P_i \in A\}, A\bigr) = S$, for every subset $A \subseteq \mathcal{P}$ with $|A| \geq t$.
    \item \textbf{Security:} $I(S; \{S_i : P_i \in U\}) = 0$, for every subset $U \subseteq \mathcal{P}$ with $|U| < t$.
\end{enumerate}
\end{definition}

Shamir’s Secret Sharing \cite{shamir} is an information-theoretic $(t, n)$-threshold scheme. Shamir’s scheme serves as a building block in the construction of our proposed PETS scheme.

\begin{scheme}[Shamir’s Secret Sharing]
\label{sch:ShamirScheme}

The scheme requires the following inputs:
\begin{itemize}
    \item A secret $S \in \mathbb{F}_q$ (for a sufficiently large finite field $\mathbb{F}_q$),
    \item A threshold $t \le n$, where $n$ is the number of participants,
    \item Distinct, nonzero elements $\alpha_1, \alpha_2, \ldots, \alpha_n \in \mathbb{F}_q$.
\end{itemize}

The scheme is executed through the following steps.
\begin{itemize}
    \item \textbf{Step 1:} 
    Construct $f(x)= S + r_1 x +\cdots+ r_{t-1}x^{t-1}$, where each coefficient $r_i$ is chosen independently and uniformly at random in $\mathbb{F}_q$.

    \item \textbf{Step 2:} Define the share $S_i = f(\alpha_i)$, for $i=1,\dots,n$.

    \item \textbf{Step 3:} Send $S_i$ to participant $P_i$ for $i=1,\dots,n$.
\end{itemize}

In order to reconstruct from a subset $A \subseteq \{P_1,\dots,P_n\}$ of size at least $t$, collects its shares $\{S_i\}_{i \in A}$, use Lagrange interpolation on $\bigl(\alpha_i, S_i\bigr)_{i \in A}$ to recover the polynomial $f(x)$, and then evaluate $f(0)$ to obtain the secret $S = f(0)$.

\end{scheme}

Computational threshold secret sharing maintains the same threshold access structure but relaxes the security requirements to a computational level. Specifically, unauthorized subsets of participants (of size fewer than $t$) cannot feasibly learn any information about the secret, assuming they are bounded by computational constraints. The security guarantee relies on the hardness of breaking a pseudorandom encryption scheme.

\begin{definition}[Computational Threshold Secret Sharing Scheme]
\label{def:CompThresholdSecretSharing}
Let $\mathcal{P} = \{P_1, P_2, \ldots, P_n\}$ be a set of $n$ participants, and let $t$ be a threshold. A \emph{computational threshold secret sharing scheme} with threshold $t$ is a pair of algorithms $(\textsf{CShare}, \textsf{CRecon})$ defined as follows:

\noindent
\textsf{CShare:} $\textsf{CShare}(S, \mathcal{P}, t) \to (S_1, S_2, \ldots, S_n)$:
\begin{itemize}
    \item $S$ is a single secret,
    \item $\mathcal{P}$ is the set of $n$ participants,
    \item $t$ is the reconstruction threshold.
\end{itemize}
This algorithm outputs a share $S_i$ for each participant $P_i$.

\noindent
\textsf{CRecon:} $\textsf{CRecon}\bigl(\{S_i : P_i \in A\}, A\bigr)$:
\begin{itemize}
    \item Takes as input a subset of participants $A \subseteq \mathcal{P}$ and their shares $\{S_i : P_i \in A\}$,
    \item If $|A| \geq t$, it reconstructs $S$.
    \item Otherwise, it outputs a failure symbol (e.g., $\bot$).
\end{itemize}

The scheme must satisfy:
\begin{enumerate}
    \item \textbf{Decodability:} For every subset $A \subseteq \mathcal{P}$ with $|A| \geq t$,
    \[
    \textsf{CRecon}\bigl(\{S_i : P_i \in A\}, A\bigr) = S.
    \]
    \item \textbf{Computational Security:} For every subset $U \subseteq \mathcal{P}$ with $|U| < t$, let $\mathcal{D}_0$ denote the distribution of shares $\{S_i : P_i \in U\}$, and let $\mathcal{D}_1$ denote a uniform distribution of the same length. Then, for any PPT adversary $\mathcal{A}$:
    \[
    \mathrm{Adv}_{\mathcal{D}_0,\mathcal{D}_1}(\mathcal{A}) 
    \;=\;
    \Bigl|\,
    \Pr\bigl[\mathcal{A}(\mathcal{D}_0) = 1\bigr]
    \;-\;
    \Pr\bigl[\mathcal{A}(\mathcal{D}_1) = 1\bigr]
    \Bigr|
    \]
    is \emph{negligible} in the security parameter $\lambda$.
\end{enumerate}
\end{definition}

Both PETS and SSMS make use of an erasure code. In \cite{krawczyk1993secret} this is referred to as an information dispersal algorithm \cite{rabin1989efficient}. It is a scheme for distributing a file $M$ among $n$ participants such that any subset of at least $t$ participants can recover $M$. A simple example being a Reed-Solomon Code \cite{reed1960polynomial}.

\begin{definition}[Information Dispersal Algorithm]
\label{def:ThresholdIDA}
Let $\mathcal{P} = \{P_1, P_2, \ldots, P_n\}$ be a set of $n$ participants, and let $t$ be a threshold. An information dispersal algorithm (IDA) is a pair of algorithms $(\textsf{IDA}, \textsf{IRA})$ defined as follows:

\noindent
\textbf{IDA:} $\textsf{IDA}(M, \mathcal{P}, t) \rightarrow (X_1, X_2, \ldots, X_n)$, where:
\begin{itemize}
    \item $M$ is the file to be stored,
    \item $\mathcal{P}$ is the set of $n$ participants,
    \item $t$ is the reconstruction threshold.
\end{itemize}
This algorithm outputs a sequence of shares $(X_1, X_2, \ldots, X_n)$, where each $X_i$ is given to participant $P_i$.

\noindent
\textbf{IRA:} $\textsf{IRA}\bigl(\{X_i : P_i \in A\}, A\bigr)$:
\begin{itemize}
    \item Takes as input a subset $A \subseteq \mathcal{P}$ of participants and their shares $\{X_i : P_i \in A\}$,
    \item If $|A| \geq t$, it reconstructs $M$.
    \item Otherwise, it outputs a failure symbol (e.g., $\bot$).
\end{itemize}

The scheme must satisfy:
\begin{itemize}
    \item \textbf{Decodability:} For every subset $A \subseteq \mathcal{P}$ with $|A| \geq t$,
    \[
    \textsf{IRA}\bigl(\{X_i : P_i \in A\}, A\bigr) = M.
    \]
\end{itemize}
\end{definition}

As in \cite{krawczyk1993secret} we assume a generic IDA for our scheme. The IDA in Example~\ref{subsubsec:OurScheme23}, for example, can be obtained from a Reed-Solomon code with the polynomial $f(x)=E_2 + E_3 x$.

\section{Optimal Computational Secret Sharing}\label{sec:main_PETS}

In this section, we present our main result: a computationally secure $(t,n)$-threshold secret sharing scheme in which each share has size exactly $\tfrac{|S| + |K|}{t}$, where $|S|$ denotes the size of the secret and $|K|$ the size of the cryptographic key. This strictly improves upon the prior scheme of Krawczyk~\cite{krawczyk1993secret}, which achieves a per-share size of $\tfrac{|S|}{t} + |K|$. Moreover, we prove that our bound is optimal under standard assumptions on the encryption system (namely, pseudorandomness and non-redundancy of the key). In the remainder of this section, we describe our new construction, PETS, and give a formal proof of its security and optimality.

\begin{scheme}[Pseudorandom Encryption Threshold Sharing]
\label{sch:GeneralThresholdScheme}

The scheme requires the following inputs:
\begin{itemize}
    \item A secret $S \in \mathbb{F}_q^{|S|}$ (for a sufficiently large finite field $\mathbb{F}_q$),
    \item A key $K \in \mathbb{F}_q^{|K|}$ for a pseudorandom encryption $\textsf{Enc}$,
    \item A threshold $t \le n$, where $n$ is the number of participants,
    \item Distinct, nonzero elements $\alpha_1, \alpha_2, \ldots, \alpha_n \in \mathbb{F}_q$.
\end{itemize}

The scheme is executed through the following steps.
\begin{itemize}
    \item \textbf{Step 1:} 
    Encrypt the secret $S$ using the key $K$ to obtain the ciphertext 
    $\textsf{Enc}(K,S) \in \mathbb{F}_q^{|S|}$.

    \item \textbf{Step 2:}
    Partition the ciphertext into $t$ blocks 
    \[
    \textsf{Enc}(K,S) = \bigl(E_1, E_2,\dots,E_{t-1}, E_t\bigr),
    \]
    where $E_1,\dots,E_{t-1} \in \mathbb{F}_q^{|K|}$ and $E_t \in \mathbb{F}_q^{|S| - (t-1)|K|}$.

    \item \textbf{Step 3:}
    Construct $f(x) = K + E_1x + \cdots + E_{t-1}x^{t-1}$.

    \item \textbf{Step 4:}
    Apply an information dispersal algorithm (IDA) to $E_t$ to obtain $\textsf{IDA}(E_t,\mathcal{P},t) = (X_1,\dots,X_n)$.

    \item \textbf{Step 5:}
    Define $S_i = \bigl(f(\alpha_i), X_i\bigr)$, for $i=1,\dots,n$.

    \item \textbf{Step 6:}
    Send $S_i$ to participant $P_i$, for $i=1,\dots,n$.
\end{itemize}

In order to reconstruct from a subset $A \subseteq \mathcal{P}$ of size at least $t$, collect its shares $\{S_i\}_{i \in A}$. First, apply the $\textsf{IRA}$ to $\{X_i\}_{i \in A}$ to recover $E_t$. Then, use polynomial interpolation on the points $\bigl(\alpha_i, f(\alpha_i)\bigr)_{i \in A}$ to recover $K$ and $E_1, \dots, E_{t-1}$. Finally, assemble the full ciphertext $\textsf{Enc}(K,S)$ from $(E_1,\dots,E_t)$ and decrypt it with $K$ to recover the secret $S$.

\end{scheme}

We now show that PETS satisfies the requirements of computational security as in Definition~\ref{def:CompThresholdSecretSharing}. Specifically, we demonstrate that any subset of fewer than $t$ participants cannot gain any computationally feasible advantage in distinguishing their shares from random values. This security guarantee relies on the pseudorandomness of the encryption.

\begin{theorem}
PETS is computationally secure (Definition~\ref{def:CompThresholdSecretSharing}).
\end{theorem}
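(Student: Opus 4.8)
The plan is to verify the two requirements of Definition~\ref{def:CompThresholdSecretSharing} in turn. Decodability is immediate: the reconstruction procedure already spelled out in Scheme~\ref{sch:GeneralThresholdScheme} recovers $E_t$ from any $t$ of the $X_i$ via \textsf{IRA}, recovers $K,E_1,\dots,E_{t-1}$ by interpolating the degree-$(t-1)$ polynomial $f$ through the $t$ points $(\alpha_i,f(\alpha_i))$, reassembles $\textsf{Enc}(K,S)=(E_1,\dots,E_t)$, and decrypts. So the real work is in the security claim: fix any $U\subseteq\mathcal P$ with $|U|=s\le t-1$ and show that the joint view $\{S_i=(f(\alpha_i),X_i):P_i\in U\}$ is computationally indistinguishable from uniform. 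I would prove this with a two-step hybrid that isolates exactly where computational hardness is used.

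First I would analyze the idealized hybrid $H_1$ in which the ciphertext $\textsf{Enc}(K,S)$ is replaced by a uniformly random string $W=(W_1,\dots,W_t)$ of the same length, and show that in $H_1$ the view of $U$ is \emph{exactly} uniform — a purely linear-algebraic fact. For the polynomial coordinates, each $f(\alpha_i)=K+\sum_{j=1}^{t-1}W_j\alpha_i^{\,j}$, and the map $(W_1,\dots,W_{t-1})\mapsto\bigl(f(\alpha_i)-K\bigr)_{i\in U}$ is, coordinatewise, multiplication by the $s\times(t-1)$ matrix $[\alpha_i^{\,j}]$, which factors as $\mathrm{diag}(\alpha_i)$ times an $s\times(t-1)$ Vandermonde matrix and hence has full row rank $s\le t-1$ (the $\alpha_i$ are distinct and nonzero). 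A surjective linear map sends the uniform $W_1,\dots,W_{t-1}$ to a uniform tuple, so the $f(\alpha_i)$ are uniform and the additive constant $K$ merely shifts this uniform distribution. The same Vandermonde argument, applied to the $t$ coefficient blocks that the Reed--Solomon IDA evaluates, shows $\{X_i\}_{i\in U}$ is uniform. Since the polynomial block consumes $W_1,\dots,W_{t-1}$ and the IDA block consumes the disjoint $W_t$, the two parts are independent, so the entire $U$-view is uniform.

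It remains to connect the real shares $H_0$ to $H_1$, which I would do by a reduction to the pseudorandomness of $\textsf{Enc}$ (Definition~\ref{def:PseudorandomEncryption}): from a distinguisher $\mathcal A$ separating the $U$-view from uniform, I construct an adversary that takes a challenge $C\in\{\textsf{Enc}(K,S),\,\mathrm{Uniform}_{|S|}\}$, parses it as $(E_1,\dots,E_t)$, builds the polynomial and IDA blocks of $U$ from these $E_j$ (with the constant slot handled as described below), runs $\mathcal A$ on the result, and echoes its bit; a uniform $C$ then yields the exactly-uniform $H_1$-view, so any non-negligible distinguishing gap would translate into a non-negligible advantage against the encryption.

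The hard part will be the seam between $H_0$ and the reduction: the constant term of $f$ is the genuine secret key $K$, which the pseudorandomness distinguisher is not given, so the reduction cannot literally reproduce the real shares when $C=\textsf{Enc}(K,S)$. I would resolve this by having the reduction plant an independent uniform value $\tilde K$ in the constant slot and then argue that this substitution is undetectable \emph{because} the masking terms $\sum_{j\ge1}E_j\alpha_i^{\,j}$ are pseudorandom — that is, $K$ is only ever exposed through the combinations $K+\sum_{j\ge1}E_j\alpha_i^{\,j}$, never in the clear. Making this precise — showing that, uniformly over the unknown $K$, the key-plus-pseudorandom-mask is indistinguishable from uniform without ever handing $K$ to the reduction — is the delicate step, and it is exactly where the pseudorandomness hypothesis on $\textsf{Enc}$ does the real work; the remaining bookkeeping (combining the two negligible hybrid gaps via the triangle inequality) is routine.
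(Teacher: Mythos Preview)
Your argument follows the same skeleton as the paper's: pass from the real polynomial $f(x)=K+\sum_j E_jx^{j}$ to an idealized one with uniform coefficients (your $H_1$, the paper's $g(x)=K+\sum_j R_jx^{j}$), observe via the Shamir/Vandermonde argument that any $t-1$ evaluations of the latter are exactly uniform, and bridge the two by the pseudorandomness of $\textsf{Enc}$. The differences are in bookkeeping and rigor rather than strategy: you swap the whole ciphertext in one hybrid where the paper sums per-coefficient advantages $\mathrm{Adv}_{f,g}\le\sum_i\mathrm{Adv}_{E_i,R_i}$; you explicitly treat the IDA component $X_i$ and its independence from the polynomial block, which the paper's proof omits entirely; and the paper closes with the two-step ``$K$ is hidden, hence $S$ is secure by the encryption'' rather than directly matching Definition~\ref{def:CompThresholdSecretSharing} as you do. The seam you flag --- that the pseudorandomness reduction is not handed $K$ and so cannot literally plant it as the constant term --- is real, but the paper's proof simply does not confront it (its hybrid implicitly assumes the reduction can form $K+\sum_j c_j\alpha_i^{j}$). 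So your plan is at least as complete as the paper's own argument, and your instinct to isolate that key-access step is well placed even though the paper does not resolve it either.
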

\begin{proof}
We show that for any subset $\{i_1, \dots, i_{t-1}\}\subseteq\{1,2,\dots,n\}$, the collection $\{f(\alpha_{i_1}), \ldots, f(\alpha_{i_{t-1}})\}$ is computationally indistinguishable from uniform random.

Let $R_1, \dots, R_{t-1}$ be uniform random and define the random polynomial $g(x) = K + R_1 x + \cdots + R_{t-1} x^{t-1}$. This is a Shamir Secret Sharing scheme \cite{shamir}, and thus, for any subset $\{i_1, \dots, i_{t-1}\}\subseteq\{1,2,\dots,n\}$, the collection $\{g(\alpha_{i_1}), \ldots, g(\alpha_{i_{t-1}})\}$ is uniform random. Thus, it suffices to show that $\{f(\alpha_{i_1}), \ldots, f(\alpha_{i_{t-1}})\}$ is computationally indistinguishable from $\{g(\alpha_{i_1}), \ldots, g(\alpha_{i_{t-1}})\}$.

Recall that each $E_i$ (for $i = 1, \dots, t-1$) is a pseudorandom element, so $E_i$ is computationally indistinguishable from a truly uniform element $R_i$. Formally, there exists a negligible function $\mu_i(\lambda)$ such that, for every PPT adversary $\mathcal{A}$, it holds that $\mathrm{Adv}_{E_i,R_i}(\mathcal{A}) \le \mu_i(\lambda)$.

Since the polynomials $f$ and $g$ differ only in these $t-1$ coefficients ($E_i$ vs.\ $R_i$), their distinguishing advantage satisfies
\[
\mathrm{Adv}_{f,g}(\mathcal{A}) \leq 
\sum_{i=1}^{t-1} \mathrm{Adv}_{E_i,R_i}(\mathcal{A})
\leq
\sum_{i=1}^{t-1} \mu_i(\lambda).
\]
Because each $\mu_i(\lambda)$ is negligible, their sum remains negligible. Hence $f$ and $g$ are computationally indistinguishable, implying that $\{f(\alpha_{i_1}), \ldots, f(\alpha_{i_{t-1}})\}$
is indistinguishable from $\{g(\alpha_{i_1}), \ldots, g(\alpha_{i_{t-1}})\}$, which is uniform.

Thus, any collection of $t-1$ or less participants learns no information about the key $K$. It follows then from the security of the encryption that the secret $S$ is computationally secure.
\end{proof}

We now establish that our share sizes are optimal. Intuitively, any valid $(t,n)$-threshold scheme must store both the full key and a retrievable version of the ciphertext across $t$ shares. Since we use a non-redundant key and a length-preserving pseudorandom encryption, reducing the per-share size below the threshold $\frac{|S| + |K|}{t}$ would break reconstructability. The theorem below makes this bound precise.

\begin{theorem}
\label{thm:ThresholdOptimality}
Let $S$ be a secret of size $|S|$, $K$ be a uniformly chosen $|K|$-bit key used in a pseudorandom encryption scheme $\mathrm{Enc}(K,S)$ whose output $\mathrm{Enc}(K,S)$ also has size $|S|$. Suppose we have a polynomial-time $(t,n)$-threshold computational secret sharing scheme for $S$. Then, the average share size per participant must be at least $\tfrac{|S| + |K|}{t}$.
\end{theorem}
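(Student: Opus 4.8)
The plan is to prove the bound by reducing the statement about the \emph{average} share size to a statement about every authorized coalition, and then to argue that each coalition's shares must jointly encode an incompressible string of length $|S|+|K|$. First I would note that, by a standard double-counting argument over all $\binom{n}{t}$ coalitions of size $t$, it suffices to prove that for every subset $A \subseteq \mathcal{P}$ with $|A|=t$ the concatenated shares $V_A = \{S_i : P_i \in A\}$ have total bit-length at least $|S|+|K|$. Summing this per-coalition inequality over all $t$-subsets and using that each individual share is counted in exactly $\binom{n-1}{t-1}$ of them yields $\sum_{i=1}^n |S_i| \ge \tfrac{n}{t}(|S|+|K|)$, which is precisely the claimed bound on the average.

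For a fixed coalition $A$, the core claim is that its shares $V_A$ efficiently determine both the key $K$ and the secret $S$, hence the ciphertext $C = \mathsf{Enc}(K,S)$, and that the resulting encoding of the pair $(K,C)$ cannot be compressed below $|S|+|K|$ bits. Decodability (Definition~\ref{def:CompThresholdSecretSharing}) gives $S$ directly from $V_A$; recovering $K$ is where the scheme's reliance on the encryption enters, and once $K$ and $S$ are in hand we recompute $C=\mathsf{Enc}(K,S)$, so $V_A$ together with the public reconstruction procedure is an efficient encoding of $(K,C)$. I would then bound its length by two incompressibility arguments. For the ciphertext I would use the non-compressibility implied by Definition~\ref{def:PseudorandomEncryption}: if fewer than $|S|$ bits sufficed to recover $C$, the encode/decode pair would compress a pseudorandom string below its own length, and a PPT distinguisher could exploit this to separate $\mathcal{D}_0 = \{\mathsf{Enc}(K,S)\}$ from $\mathcal{D}_1 = \{\mathrm{Uniform}_{|S|}\}$ with non-negligible advantage, contradicting pseudorandomness. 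For the key I would invoke non-redundancy (Definition~\ref{def:NonRedundantEncryptionScheme}): since flipping any single bit of $K$ makes $\mathsf{Dec}$ fail on some message, no bit of the key is determined by the others, so the key part of the encoding cannot be shortened below $|K|$ bits without breaking correctness on the message witnessed by that definition. Combining the two gives $|V_A| \ge |S| + |K|$.

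The hard part will be making the \emph{joint} incompressibility rigorous, since $C = \mathsf{Enc}(K,S)$ and $K$ are correlated, so I cannot simply add $|S|$ and $|K|$ as if $(K,C)$ were uniform over $\{0,1\}^{|S|+|K|}$. The clean way is to condition: the $|S|$-bit floor comes purely from the pseudorandomness of $C$ and holds regardless of $K$, while the extra $|K|$ bits come from the observation that, \emph{given} $C$, the key still cannot be compressed below $|K|$ bits, because any shorter description would force two distinct keys to collapse to the same stored value and hence make reconstruction err on the bit position guaranteed by Definition~\ref{def:NonRedundantEncryptionScheme}. A second subtlety is that these are computational bounds holding only up to negligible error, so the conclusion is properly stated as ``at least $\tfrac{|S|+|K|}{t}$ up to a negligible additive term,'' and the averaging step must track these negligible slacks so that they do not accumulate beyond negligibility across the polynomially many coalitions involved.
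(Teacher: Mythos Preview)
Your proposal follows essentially the same approach as the paper: for each authorized $t$-subset $A$, argue that the collected shares must encode both the full key $K$ (via the non-redundant key property and uniformity of $K$) and the full ciphertext $\mathsf{Enc}(K,S)$ (via the incompressibility implied by pseudorandomness), hence $|V_A|\ge |S|+|K|$, and then divide by $t$. The paper's proof is brief and stops at ``the average share size in that subset must be at least $\tfrac{|K|+|S|}{t}$,'' whereas you add three layers of care the paper omits --- the double-counting over all $\binom{n}{t}$ coalitions to convert the per-subset bound into a bound on $\tfrac{1}{n}\sum_i |S_i|$, the explicit treatment of the correlation between $K$ and $C$ when summing the two incompressibility claims, and the tracking of negligible slack --- but these are refinements of the same argument, not a different route.

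One point worth flagging, present in both your proposal and the paper: the step ``$V_A$ efficiently determines $K$'' is not a consequence of Definition~\ref{def:CompThresholdSecretSharing}, which only guarantees recovery of $S$. Both arguments tacitly assume the secret-sharing scheme is \emph{built on} the given encryption (so that $K$ is part of what the shares must carry), which is the intended reading of the theorem's hypotheses but is not something you derive --- you simply assert it where you write ``recovering $K$ is where the scheme's reliance on the encryption enters.'' This is consistent with the paper's own level of formality, so it is not a divergence, just an assumption to make explicit.
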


\begin{proof}
Consider a $(t,n)$-threshold scheme that distributes $S$ among $n$ participants $P_1,\dots,P_n$ via shares $S_1,\dots,S_n$. By definition, any subset of size $t$ can reconstruct $S$. We show that for reconstruction to be possible using pseudorandom encryption with key $K$, the \emph{combined size} of those $t$ shares must be at least $|S| + |K|$.

The non-redundant key property (see Definition~\ref{def:NonRedundantEncryptionScheme}) implies that all bits of $K$ are required for decryption. Thus, any authorized subset of participants must collectively hold the entire key $K$. Moreover, since $K$ is uniform, there is no way to compress it. Since $\mathrm{Enc}(K,S)$ is computationally indistinguishable from a random string of length $|S|$, there is no polynomial-time method to store fewer than $|S|$ bits of the ciphertext and still later decompress to recover $\mathrm{Enc}(K,S)$.

\begin{figure*}
    \centering
    \includegraphics[width=0.95\linewidth]{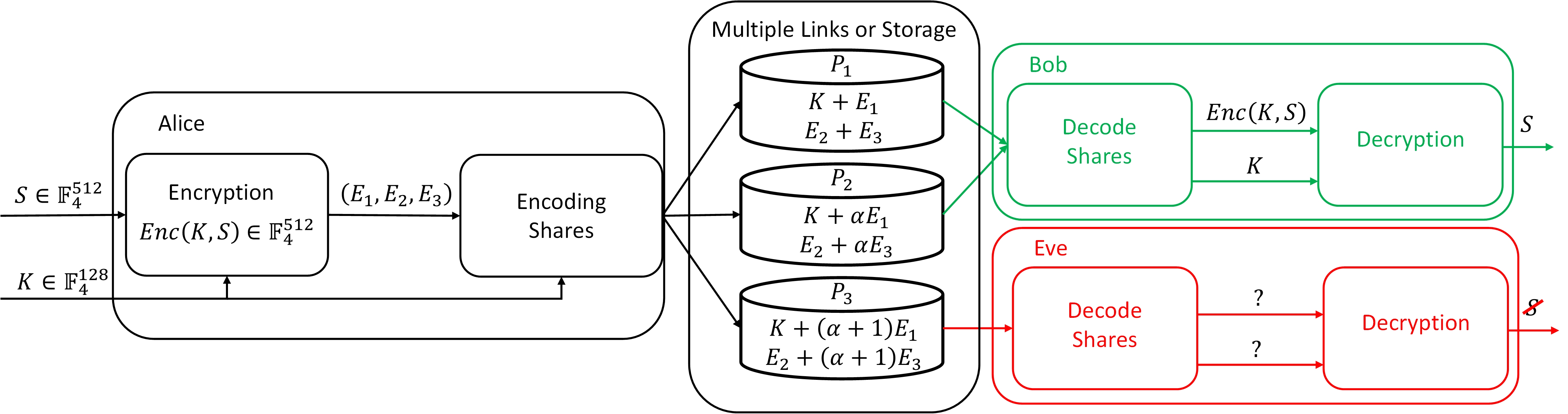}
    \vspace{-0.25cm}
    \caption{Secret sharing for secure communication over multiple links or storage. Alice is the encryption and encoding operations on the left, the three links/storeges are the three participants $P_1$, $P_2$, and $P_3$ in the middle, and Bob and Eve are in green and red, respectively, on the right side.}
    \label{fig:example_com}
    \vspace{-0.4cm}
\end{figure*}

Hence, for a subset $A\subseteq\{1,\dots,n\}$ with $|A|=t$ to reconstruct $S$, the shares $\{S_i : i\in A\}$ must collectively contain at least $|K| + |S|$ bits of information. Otherwise, decryption would fail either because $K$ is incomplete or because $\mathrm{Enc}(K,S)$ cannot be retrieved. Since there are $t$ participants in any authorized subset, and each one holds exactly one share, the average share size in that subset must be at least $\frac{|K|+|S|}{t}$. 

\end{proof}

Theorem~\ref{thm:ThresholdOptimality} provides a lower bound on the share size necessary for any computationally secure $(t, n)$-threshold secret-sharing scheme. By achieving this bound, PETS demonstrates optimality in terms of share size.

\begin{corollary}
    PETS achieves the optimal average share size.
\end{corollary}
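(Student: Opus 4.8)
The plan is to establish optimality by a direct comparison: compute the exact per-share size produced by PETS (Scheme~\ref{sch:GeneralThresholdScheme}) and show it equals the lower bound $\tfrac{|S|+|K|}{t}$ from Theorem~\ref{thm:ThresholdOptimality}. Since every share has the uniform form $S_i = (f(\alpha_i), X_i)$, the average share size equals the size of any single share, so it suffices to size the two components separately.

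First I would size the polynomial part. As $f(x) = K + E_1 x + \cdots + E_{t-1}x^{t-1}$ has all coefficients in $\mathbb{F}_q^{|K|}$, each evaluation $f(\alpha_i)$ is an element of $\mathbb{F}_q^{|K|}$ and contributes exactly $|K|$ symbols. Then I would size the dispersal part: by Step~2 of the scheme the tail block $E_t$ has length $|S|-(t-1)|K|$, and an ideal IDA---such as the Reed-Solomon instantiation mentioned after Definition~\ref{def:ThresholdIDA}---yields shares $X_i$ of size $\tfrac{|S|-(t-1)|K|}{t}$. Summing the two contributions and simplifying gives
\[
|K| + \frac{|S|-(t-1)|K|}{t} = \frac{t|K| + |S| - (t-1)|K|}{t} = \frac{|S|+|K|}{t},
\]
so each PETS share---and hence the average---has size exactly $\tfrac{|S|+|K|}{t}$. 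Comparing with the lower bound of Theorem~\ref{thm:ThresholdOptimality} then yields the claim.

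The step I expect to require the most care is justifying that the IDA attains the ideal rate $\tfrac{|E_t|}{t}$ with no redundancy overhead, since the final identity hinges on it. I would make this assumption explicit---it holds for the MDS (Reed-Solomon) codes the paper invokes---and observe that any overhead in the IDA would enlarge the second term and prevent PETS from meeting the bound with equality. A secondary point worth a sentence is that the partition in Step~2 is well-defined only when $|S| \geq (t-1)|K|$, so that $E_t$ has nonnegative length; I would note this mild regularity condition, under which the matching is exact.
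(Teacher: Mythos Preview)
Your proposal is correct and matches the paper's approach: the paper states the corollary without proof, having already asserted in the introduction to Section~\ref{sec:main_PETS} that each PETS share has size exactly $\tfrac{|S|+|K|}{t}$ and then invoking Theorem~\ref{thm:ThresholdOptimality} for the matching lower bound. Your explicit decomposition $|K| + \tfrac{|S|-(t-1)|K|}{t} = \tfrac{|S|+|K|}{t}$, together with the caveats about the IDA being rate-optimal (MDS) and the regularity condition $|S|\ge (t-1)|K|$, in fact supplies more detail than the paper itself provides.
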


\section{Fundamental Gains in Information Rate}
\label{sec:FundamentalGains}

Secret sharing can be applied to secure communication over multiple links \cite{ozarow1984wire,el2007wiretap,cai2010secure,el2012secure}, as illustrated in Fig.~\ref{fig:example_com}. Consider a sender, Alice, and a receiver, Bob, connected by $n$ parallel links, with an eavesdropper, Eve, capable of observing up to $t-1$ of these links. Treating each link as a \say{participant} in a $(t,n)$-threshold scheme, Alice splits her message into $n$ shares -- one share per link -- and transmits each share across its respective link. Bob, who receives all $n$ shares, can reconstruct the original message, whereas Eve, intercepting at most $t-1$ links, learns nothing about the message.  Performance is then measured by the information rate.

\begin{definition}[Information Rate]
\label{def:InformationRate}
The \emph{information rate} of a secret-sharing scheme is defined as
\[
  \mathrm{Rate}
  =
  \frac{\text{size of secret}}{\text{total size of all shares}}.
\]
\end{definition}

In this context, PETS provides a fundamental advantage when the total number of participants $n$ and the threshold $t$ grow linearly together. Specifically, if $t = \delta n$ for a fixed constant $\delta \in (0,1]$, the classical schemes of Shamir and SSMS force the information rate down to zero. In contrast, PETS achieves a constant, non-zero rate. This contrast is illustrated in the examples below.

\begin{example}[\boldmath$(n,n)$-Threshold Shamir Secret Sharing]
\label{ex:ShamirNN}
Consider Shamir’s scheme for the special case of threshold $n$. A secret $S$ of size $|S|$ is distributed into $n$ shares, each of which has size $|S|$. Hence, the total size of all shares is $n \cdot |S|$. By Definition~\ref{def:InformationRate},
\[
  \mathrm{Rate}
  =
  \frac{|S|}{n \cdot |S|}
  =
  \frac{1}{n}
  \;\longrightarrow\;
  0
  \quad (\text{as } n \to \infty).
\]
Thus, as $n$ grows large, Shamir’s rate goes to zero.
\end{example}

\begin{example}[SSMS Scheme]
\label{ex:SSMSNN}
Next, consider Krawczyk’s SSMS \cite{krawczyk1993secret} with threshold $n$. Each share has size
\(\tfrac{|S|}{n} + |K|\).
Hence, the total size of all $n$ shares is $|S| + n \cdot |K|$. Thus,
\[
  \mathrm{Rate}
  \;=\;
  \frac{|S|}{|S| + n \cdot |K|}
  \;\longrightarrow\;
  0
  \quad (\text{as } n \to \infty).
\]
While SSMS can yield shares smaller than those in purely information-theoretic approaches, the overall rate nevertheless still goes to zero with increasing $n$.
\end{example}

\begin{example}[PETS Scheme]
\label{ex:PETSNN}
In contrast, consider the PETS construction (Scheme~\ref{sch:GeneralThresholdScheme}) in the $(n,n)$-threshold scenario. Each share has size $\frac{|S| + |K|}{n}$, so the total size of all $n$ shares is $|S| + |K|$. By Definition~\ref{def:InformationRate}, the information rate is
\[
  \mathrm{Rate}
  \;=\;
  \frac{|S|}{\,|S| + |K|\,}.
\]
This does not vanish as $n \to \infty$. Hence, unlike Shamir’s and SSMS, PETS retains a constant rate as the number of communication links (participants) grows.
\end{example}

As a motivating application, recent work \cite{cohen2023absolute} has shown how secret sharing can secure wireless communications across multiple frequency channels. In this setting, each frequency channel corresponds to a link (i.e., a “participant”) in an $(n,t)$-threshold scheme, and an eavesdropper can obtain information from at most $t-1$ frequency channel measurements. To make eavesdropping more difficult, one can increase $n$ (i.e., use more frequency channels) so that there is a larger ``blind region'', i.e. regions where the eavesdropper can obtain information from at most $t-1$ measurements. However, if one relies on Shamir’s or SSMS with threshold $n$, the information rate quickly goes to zero as $n$ grows. By contrast, PETS maintains a constant nonzero rate
\[
  \mathrm{Rate}_{\text{PETS}}
  \;=\;
  \frac{|S|}{\,|S| + |K|\,},
\]
thus preserving high throughput even as the number of frequency channels grows.

In general, we obtain the following result.

\begin{theorem}[PETS Rate with $t = \delta\,n$]
\label{thm:PETSRateWithDeltaN}
Let $\delta$ be a fixed constant in $(0,1]$ and consider an $(n,t)$-threshold PETS construction where $t = \delta \, n$. If $|S|$ and $|K|$ denote the sizes of the secret and key respectively, then the information rate satisfies
\[
  \mathrm{Rate}
  =
  \delta 
  \cdot
  \frac{|S|}{\,|S| + |K|\,}.
\]
\end{theorem}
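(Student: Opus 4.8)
The plan is to compute the rate directly from Definition~\ref{def:InformationRate}, so the first task is to pin down the total size of all shares in the PETS construction. From Scheme~\ref{sch:GeneralThresholdScheme}, each participant $P_i$ receives $S_i = (f(\alpha_i), X_i)$, where $f(\alpha_i) \in \mathbb{F}_q^{|K|}$ contributes $|K|$ symbols and $X_i$ is an IDA share of $E_t \in \mathbb{F}_q^{|S|-(t-1)|K|}$. Since the IDA partitions a file of size $|S|-(t-1)|K|$ into $t$-reconstructible pieces, each $X_i$ has size $\tfrac{|S|-(t-1)|K|}{t}$. I would first verify the per-share size by the elementary simplification
\[
|K| + \frac{|S|-(t-1)|K|}{t} = \frac{t|K| + |S| - (t-1)|K|}{t} = \frac{|S| + |K|}{t},
\]
which confirms that every share has size exactly $\tfrac{|S|+|K|}{t}$ for the general threshold $t$, not just the $(n,n)$ case already treated in Example~\ref{ex:PETSNN}.

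With the per-share size established, the remaining step is a direct substitution. Summing over all $n$ participants, the total size of all shares is $n \cdot \tfrac{|S|+|K|}{t}$. Plugging this into Definition~\ref{def:InformationRate} and then using the hypothesis $t = \delta n$ gives
\[
\mathrm{Rate} = \frac{|S|}{\,n \cdot \tfrac{|S|+|K|}{t}\,} = \frac{t\,|S|}{\,n\,(|S|+|K|)\,} = \frac{\delta n\,|S|}{\,n\,(|S|+|K|)\,} = \delta \cdot \frac{|S|}{\,|S|+|K|\,},
\]
where the factor of $n$ cancels, yielding the claimed formula.

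There is essentially no analytic obstacle here: the statement follows from the definition by algebra alone. The only point warranting care is the verification of the per-share size, which relies on the IDA producing shares of size $\tfrac{|S|-(t-1)|K|}{t}$ (as it does for an optimal erasure code such as Reed--Solomon, per the discussion following Definition~\ref{def:ThresholdIDA}); once this is in hand, the cancellation of $n$ is what makes the rate independent of the number of participants and proportional to $\delta$, in contrast to the vanishing rates of Examples~\ref{ex:ShamirNN} and~\ref{ex:SSMSNN}.
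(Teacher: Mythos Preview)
Your proposal is correct and follows essentially the same approach as the paper: compute the total share size from the per-share size $\tfrac{|S|+|K|}{t}$ and plug into Definition~\ref{def:InformationRate}, then substitute $t=\delta n$ so that $n$ cancels. The only difference is cosmetic---you additionally verify the per-share size from the components of Scheme~\ref{sch:GeneralThresholdScheme}, whereas the paper simply quotes it.
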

\begin{proof}
In the $(t,n)$-threshold PETS, each share is $\frac{|S| + |K|}{t}$. Since $t = \delta\,n$, each share has size $\frac{|S| + |K|}{\delta n}$.
Hence, the total size of all $n$ shares is $n 
  \cdot
  \frac{|S| + |K|}{\delta n}
  =
  \frac{|S| + |K|}{\delta}$.
By Definition~\ref{def:InformationRate}, the rate is
\[
  \mathrm{Rate}
  \;=\;
  \frac{|S|}{\,\frac{|S| + |K|}{\delta}\,}
  \;=\;
  \delta 
  \,\cdot\,
  \frac{|S|}{\,|S| + |K|\,}.
\]
\end{proof}

\section{Discussion and Future Work}\label{sec:diss}

This work revisits the computational secret-sharing paradigm introduced by Krawczyk~\cite{krawczyk1993secret}, focusing on optimizing share sizes within the original SSMS setting. Under reasonable assumptions regarding the employed encryption, PETS achieves an optimal per-share size of $\tfrac{|S| + |K|}{t}$. Furthermore, PETS attains significant gains in the information rate, especially when scaling the threshold linearly with the number of participants. Unlike traditional schemes, whose rates diminish rapidly as the number of participants increases, PETS maintains a constant, nonzero rate, making it particularly suitable for secure communications scenarios requiring many parallel links or frequency channels.

While robustness has been explored in follow-up works to SSMS, such as the robust schemes of Bellare and Rogaway \cite{bellare2007robust}, our scheme focuses solely on achieving computational security with minimal share size. Extending PETS to incorporate robustness, particularly in the presence of adversarially corrupted shares or participants, is a promising direction for future research. Such an extension could involve integrating verifiable sharing techniques \cite{4568164,4568297} to enhance resilience while preserving the efficiency of PETS. 

Additionally, our approach opens up several avenues for further exploration. One potential direction is to adapt PETS to generalized access structures \cite{ito,benaloh,stinson}, where subsets of participants are defined by arbitrary monotone structures rather than a fixed threshold. Many applications require more flexible or hierarchical access structures, where certain participants may hold higher privilege or weighted votes in reconstructing the secret. Extending the polynomial-based framework of PETS to support these configurations could achieve similar benefits of reduced per-share size. 

\section*{Acknowledgment}

RD was supported by NSF under Grant CNS-214813.

\bibliographystyle{IEEEtran}
\bibliography{ref}

\end{document}